\newtheorem{theorem}{Theorem}
\newtheorem{proof}{Proof}
\newtheorem{proposition}{Proposition}
\newcommand{\algorithmicinput}{\textbf{Input:}}
\newcommand{\INPUT}{\item[\algorithmicinput]}
\newcommand{\algorithmicoutput}{\textbf{Output:}}
\newcommand{\OUTPUT}{\item[\algorithmicoutput]}
\begin{document}

\title{Low-Complexity Linear Equalization for OTFS Systems with Rectangular Waveforms}

\author{Wenjun~Xu,~\IEEEmembership{Senior Member,~IEEE,} Tingting~Zou, Hui~Gao,~\IEEEmembership{Senior Member,~IEEE,} Zhisong~Bie, Zhiyong~Feng,~\IEEEmembership{Senior Member,~IEEE} and Zhiguo~Ding,~\IEEEmembership{Senior Member,~IEEE}   \thanks{W. Xu (Corresponding Author), T. Zou, H. Gao, Z. Bie and Z. Feng are with the Key Laboratory of Universal Wireless Communications, Ministry of Education, Beijing University of Posts and Telecommunications, Beijing 100876, China (Emails: {wjxu, zoutt, huigao, zhisongbie, fengzy}@bupt.edu.cn).

Z. Ding is with the School of Electrical and Electronic Engineering, the University of Manchester, UK (Email: zhiguo.ding@manchester.ac.uk).}}
\maketitle
\begin{abstract}
Orthogonal time frequency space (OTFS) is a promising technology for high-mobility wireless communications. However, the equalization realization in practical OTFS systems is a great challenge when the non-ideal rectangular waveforms are adopted. In this paper, first of all, we theoretically prove that the effective channel matrix under rectangular waveforms holds the block-circulant property and the special Fourier transform structure with time-domain channel. Then, by exploiting the proved property and structure, we further propose the corresponding low-complexity algorithms for two mainstream linear equalization methods, i.e., zero-forcing (ZF) and minimum mean square error (MMSE). Compared with the existing direct-matrix-inversion-based equalization, the complexities can be reduced by a few thousand times for ZF and a few hundred times for MMSE without any performance loss, when the numbers of symbols and subcarriers are both 32 in OTFS systems.
\end{abstract}
\begin{IEEEkeywords}
Orthogonal time frequency space (OTFS), channel equalization, MMSE, rectangular waveforms, block-circulant.
\end{IEEEkeywords}

%
\IEEEpeerreviewmaketitle

\section{Introduction}
Orthogonal time frequency space (OTFS) is a kind of modulation technologies which can strongly combat the doppler effect in time-variant channels by spreading each data symbol on the delay-Doppler (DD) plane over the entire time-frequency (TF) plane~\cite{DBLP:journals/corr/abs-1802-02623}. Thus, OTFS is very robust in high-mobility scenarios, which is identified as a main challenge of future wireless communications.

Most existing work assumes that ideal waveforms are adopted in OTFS systems, which ensures the transmitter- and receiver-side waveforms satisfy bi-orthogonality~\cite{7925924,DBLP:journals/corr/abs-1808-00519,8786203}. This assumption can reap the advantages that the inter-symbol interference (ISI) and inter-carrier interference (ICI) are both perfectly eliminated, and constant channel gains are guaranteed for each data symbol in the DD domain~\cite{8424569}. However, the ideal waveform assumption is unavailable in practical scenarios, and the ISI and ICI is inevitably introduced, which causes non-uniform channel gains in the DD domain. As a consequence, channel matrix is no longer doubly circular~\cite{8424569}, which prohibitively increases the complexity of channel equalization, and hence hinders the realization of practical OTFS systems.

Therefore, low-complexity equalization plays a pivotal role in practical OTFS systems. Considering the potential of complexity reduction, linear equalization is usually much preferred, especially when the fractional doppler shift occurs, destroys the sparsity of the equivalent channel in the DD domain, and hence disables the feasibility of nonlinear equalization, e.g., message passing (MP)~\cite{7925924,DBLP:journals/corr/abs-1903-09441}. For the widely-used linear equalization, i.e., zero-forcing (ZF) and minimum mean square error (MMSE), the complexities of the existing direct-matrix-inversion-based schemes are both ${\cal O}( {{{\left( {NM} \right)}^3}} )$, where $M$, $N$ are the numbers of OTFS symbols and subcarriers, respectively. Typically, the value of $NM$ is in the order of thousand or even larger. Thus, even for the linear equalization of practical OTFS systems with non-ideal rectangular waveforms, the complexity is still overly intensive, and heavily limits the realization of practical OTFS systems.

To this end, we concentrate on low-complexity linear equalization algorithms for practical OTFS systems in this paper. Specifically, we first prove the block-circulant property and special Fourier transform structure of the effective channel matrix under rectangular waveforms. Then, low-complexity ZF and MMSE equalization algorithms are proposed by exploiting the proved property and structure. Our proposed algorithms significantly decrease the computational complexities from ${\cal O}( {{{\left( {NM} \right)}^3}} )$ to $\mathcal{O}\left( {{M^2}N\log N} \right)$ for ZF and ${\cal O}\left( {{M^2}{N^2}P} \right)$ for MMSE, where $P$ is the number of channel multi-paths.

In this paper, scalars, matrixes, and vectors are denoted by boldface capital, boldface lowercase, and normal letters, respectively. ${{\bf{I}}_N}$ is the identity matrix of size $N$. ${{\bf{A}}^{ - 1}}$, ${{\bf{A}}^{\rm{T}}}$, ${{\bf{A}}^{\rm{H}}}$ represent the inverse, transpose, and conjugate transpose of ${\bf{A}}$. The operation $\otimes $ represents Kronecker product. The operation ${\rm{circ}}\left\{ {{{\bf{a}}}} \right\}$ forms cyclic square matrix ${{\bf{A}}}$ whose first column is the vector ${{{\bf{a}}}}$, and ${\rm{circ}}\left\{ {{{\bf{A}}_1}, \cdot  \cdot {{\bf{A}}_N}} \right\}$ forms a block-circulant square matrix whose first column block as ${({\bf{A}}_{_1}^{\rm{T}}, \cdot  \cdot {\bf{A}}_{_N}^{\rm{T}})^{\rm{T}}}$. ${{\bf{F}}_M},{\bf{F}}_M^{\rm{H}} \in {\mathbb{C}^{M \times M}}$ are $M$-point DFT and IDFT matrices, respectively. ${[ {{{\bf{A}}}} ]_{i,p}}$ is the ${i_{{\rm{th}}}}$ row, ${p_{{\rm{th}}}}$ column element of ${{{\bf{A}}}}$, ${[ {{{\bf{a}}}} ]_{p}}$ is the ${p_{{\rm{th}}}}$ element of ${{{\bf{a}}}}$, and ${[ {{{\bf{A}}}} ]_{i}}$ is the ${i_{{\rm{th}}}}$ row vector of ${{{\bf{A}}}}$. $a:b$ denotes the indexes increasing from $a$ to $b$ one by one, and ${\left\langle  \cdot  \right\rangle _N}$ denotes the operation of modulus $N$.

\section{OFDM-Based OTFS Systems}
The realization of OTFS is compatible with OFDM by adding a pre-processing and post-processing block~\cite{7925924}, i.e., OFDM technologies can be adopted for multicarrier time-frequency signal transmission in OTFS systems~\cite{8119540,8422467,DBLP:journals/corr/abs-1903-09441}. In this section, a brief review of OFDM-based OTFS systems is given, and then the effective channel matrices of OTFS systems are derived for rectangular and ideal waveforms, respectively.

\subsection{Elements of OFDM-based OTFS Systems}
The time-frequency plane is defined as an $M \times N$ lattice which is sampled by interval $T$ (second) along the time axis and $\Delta f$ (Hz) along the frequency axis, respectively, i.e.,
\[\Lambda {\rm{ = }}\left\{ {\left( {m\Delta f,nT} \right):m = 1,...,M,n = 1,...,N} \right\}.\]
Delay-Doppler plane is defined as an \begin{small} $M \times N$ lattice \[\Gamma {\rm{ = }}\left\{ {\left( {\frac{l}{{M\Delta f}},\frac{k}{{NT}}} \right):l = 1,...,M,k = 1,...,N} \right\}.\vspace{-0.2cm}\]\end{small}%

As a pre-processing block added to OFDM systems at the transmitter, \emph{inverse symplectic finite Fourier transform} (ISFFT) converts the modulated data in DD domain to TF domain. ISFFT can be written in a matrix form as
\begin{equation}
{{\bf{X}}_{{\rm{TF}}}} = {\rm{ISFFT(}}{{\bf{X}}_{{\rm{DD}}}}{\rm{) = }}{{\bf{F}}_M}{{\bf{X}}_{{\rm{DD}}}}{\bf{F}}_N^{\rm{H}},
\end{equation}
where ${{\bf{X}}_{{\rm{TF}}}}$, \!${{\bf{X}}_{{\rm{DD}}}}\!\in \!{\mathbb{C}^{M \times N}}$ denote the discrete transmit data in a matrix form mapped on $\Lambda $ and $\Gamma $, respectively. At the receiver, the post-processing block additionally executes \emph{symplectic finite Fourier transform} (SFFT), which converts the output signal ${{\bf{R}}_{{\rm{TF}}}}$ on $\Lambda $ plane to DD-domain signal ${{\bf{Y}}_{{\rm{DD}}}}$,
\begin{equation}
{{\bf{Y}}_{{\rm{DD}}}} = {\rm{SFFT(}}{{\bf{R}}_{{\rm{TF}}}}{\rm{) = }}{\bf{F}}_M^{\rm{H}}{{\bf{R}}_{{\rm{TF}}}}{{\bf{F}}_N}.
\end{equation}
For the brevity of subsequent analysis, a DD-domain generalized input-output model is built for vectorized signal as
\begin{equation}\label{endtoend}
{{\bf{y}}_{{\rm{DD}}}} = {{\bf{H}}_{{\rm{e}}{\rm{f}}{\rm{f}}}}{{\bf{x}}_{{\rm{DD}}}}{\rm{ + }}{\bf{w}},
\end{equation}
where ${{\bf{x}}_{{\rm{DD}}}},{{\bf{y}}_{{\rm{DD}}}}\!\! \in\!\! {\mathbb{C}^{NM \times 1}}$ are obtained by column-wise stacking ${{\bf{X}}_{{\rm{DD}}}},{{\bf{Y}}_{{\rm{DD}}}}$ into column vectors, respectively. ${{\bf{H}}_{{\rm{eff}}}} \!\!\in\!\! {\mathbb{C}^{NM \times NM}}$ denotes the DD-domain end-to-end effective channel matrix and ${\bf{w}}$ is the noise vector. Note that we assume the same type of waveform at the transmitter and receiver. Details of ${{\bf{H}}_{{\rm{eff}}}}$ are discussed as follows.

\subsection{Effective Channel Matrix with Rectangular Waveforms}
In~\cite{8422467}, rectangular waveforms are considered in an OFDM-based OTFS system. The effective channel ${{\bf{H}}_{{\rm{eff}}}}$ can be expressed as
\begin{equation}\label{heff}
{{\bf{H}}_{{\rm{eff}}}}\!\! =\!\!(\underbrace {{{\bf{F}}_N} \!\otimes\! {\bf{F}}_M^{\rm{H}}}_{{\text{SFFT}}})\!{{{\bf{\Pi }}_{\rm{r}}}}\!(\!\underbrace{{{{\bf{G}}_{{\rm{r}}}}} \!\otimes \!{{\bf{F}}_M}}_{{{\text{OFDM Demod}}\text{.}}}\!){\bf{\tilde H}}(\underbrace{{{{{\bf{G}}_{{\rm{t}}}}} \!\otimes\! {\bf{F}}_M^{\rm{H}}}}_{{\text{OFDM Mod.}}}){{{\bf{\Pi }}_{\rm{t}}}}(\underbrace{{{\bf{F}}_N^{\rm{H}} \!\otimes \!{{\bf{F}}_M}}}_{\text{ISFFT}\!}),
\end{equation}
where ${{{\bf{\Pi }}_{\rm{r}}}},{{{\bf{\Pi }}_{\rm{t}}}} \!\!\in\!\! {\mathbb{C}^{MN \times MN}}$ and ${{\bf{G}}_{{\rm{r}}}},{{\bf{G}}_{{\rm{t}}}}\!\!\in\!\!{\mathbb{C}^{N \times N}}$ denote the receive, transmit windows and receive, transmit waveform matrices, respectively. Note that ${{{\bf{\Pi }}_{\rm{r}}}},{{{\bf{\Pi }}_{\rm{t}}}}\!\!= \!\!{{\bf{I}}_{MN}}$, and ${{\bf{G}}_{\rm{r}}},{{\bf{G}}_{\rm{t}}}\!\! =\!\! {{\bf{I}}_N}$ when the rectangular windows and waveforms are considered~\cite{8422467}. ${{\bf{\tilde H}}}\!\in \!{\mathbb{C}^{MN \times MN}}$ is the time-domain channel impulse response matrix of the vectorized time-domain signal after cyclic prefix (CP) adding and removing operations~\cite{8422467}. ${{\bf{\tilde H}}}$ is a block-diagonal matrix when the length of CP is larger than the maximum delay of the channel, and ${\bf{\tilde H}}$ is expressed as ${\bf{\tilde H}}\!\! = \!\!{\rm{diag}}({{\bf{\tilde H}}_1},{{\bf{\tilde H}}_2} \cdots {{\bf{\tilde H}}_N})$, where ${{\bf{\tilde H}}_p} \!\!\in\!\! {\mathbb{C}^{M \times M}}$. According to the tap-delay-line model~\cite{8119540}, there are only $P$ non-zero elements in each column and each row of ${{{\bf{\tilde H}}}_p}$ at fixed positions determined by the channel multi-path delay positions on the time axis, which is denoted as ${\bf{d}} \!\!=\! \!\left[ {{D_1},{D_2} \cdots {D_P}} \right]$. Therein, ${{D_P}}$ is the maximum delay position. Finally, for our considered system, the effective channel matrix in (\ref{heff}) is simplified to
\begin{equation}\label{recheff}
 {\bf{H}}_{_{{\rm{eff}}}}^{{\rm{rect}}} = \left( {{{\bf{F}}_N} \otimes {{\bf{I}}_M}} \right){\bf{\tilde H}}\left( {{\bf{F}}_N^{\rm{H}} \otimes {{\bf{I}}_M}} \right).
\end{equation}

\subsection{Channel Mismatch Under Ideal Waveform Assumption}
In practical wireless systems, if the effect of practical waveforms is not considered, i.e., assuming the ideal waveforms which cause that each data symbol experiences the constant channel gain at the receiver~\cite{8786203}, the systems may suffer the channel mismatch issues. We term this case as the ideal waveform assumption. In such a case, the channel gain of the first data symbol of ${{\bf{x}}_{{\rm{DD}}}}$, i.e., the first column of ${\bf{H}}_{{\rm{eff}}}^{{\rm{rect}}}$, is wrongly deemed as the constant channel gain, i.e., $[{\bf{H}}_{{\rm{eff}}}^{{\rm{rect}}}]_1^{\rm{T}}$. Furthermore, $[{\bf{H}}_{{\rm{eff}}}^{{\rm{rect}}}]_1^{\rm{T}}$ can be expressed as ${[ {{\bf{h}}_1^{\rm{T}},{\bf{h}}_2^{\rm{T}},...,{\bf{h}}_N^{\rm{T}}} ]^{\rm{T}}}$, wherein ${\bf{h}}_p \!\!\in \!\! {\mathbb{C}^{M \times 1}} $. Then, we can reshape $[{\bf{H}}_{{\rm{eff}}}^{{\rm{rect}}}]_1^{\rm{T}}$ into a matrix form as ${{\bf{H}}_{{\rm{DD}}}} \!\!=\!\![{{\bf{h}}_{\rm{1}}},{{\bf{h}}_{\rm{1}}},...,{{\bf{h}}_N}]\!\!\in\!\! {\mathbb{C}^{M \times N}}$. Hereby, each received data symbol can be rewritten by a two-dimension circular convolution between ${{\bf{H}}_{{\rm{DD}}}}$ and the transmit data as~\cite{DBLP:journals/corr/abs-1808-00519},
\begin{equation}\label{twocirc}
{\left[ {{{\bf{Y}}_{{\rm{DD}}}}} \right]_{k,l}}\!\! = \!\!\sum\limits_{k' = 1}^M  \sum\limits_{l' = 1}^N {{{\left[ {{{\bf{H}}_{{\rm{DD}}}}} \right]}_{k',l'}}{{\left[ {{{\bf{X}}_{{\rm{DD}}}}} \right]}_{{{\left\langle {k - k'} \right\rangle }_M} + 1,{{\left\langle {l - l'} \right\rangle }_N} + 1}}} .
\end{equation}
Based on (\ref{twocirc}), the mismatched effective channel matrix ${\bf{H}}_{{\rm{eff}}}^{{\rm{ide}}}$ under the ideal waveform assumption is given by
\begin{equation}\label{ideeff}
    {\bf{H}}_{{\rm{eff}}}^{{\rm{ide}}} \!=\! {\rm{circ}}\left\{ {{\rm{circ}}\left\{ {\bf{h}}_1 \right\},{\rm{circ}}\left\{ {\bf{h}}_2 \right\},\! \cdots \! ,{\rm{circ}}\left\{ {\bf{h}}_N \right\}} \right\}.
\end{equation}

\section{Low-Complexity ZF and MMSE Equalizations}
In this section, we first explore a deep insight into ${\bf{H}}_{{\rm{eff}}}^{{\rm{rect}}}$. Then, low-complexity ZF and MMSE equalization methods are proposed for practical OTFS systems with rectangular waveforms based on the structure feature of ${\bf{H}}_{{\rm{eff}}}^{{\rm{rect}}}$.

In order to facilitate the subsequent analysis, we first introduce two new operators. ${\rm{FF}}{{\rm{T}}_{{\rm{Mtx}}}}( \cdot )$ is defined as the Fourier transform between two sets of matrices. Taking ${{{\bf{\tilde H}}}_p},{{\bf{A}}_n} \!\in\! \mathbb{C}^{M \times M}$ as example, ${{{\bf{\tilde H}}}_p}{\rm{ = FF}}{{\rm{T}}_{{\rm{Mtx}}}}({{\bf{A}}_n})$ when the relationship ${{{\bf{\tilde H}}}_p} \!\!=\! \!\sum\nolimits_{n = 1}^N {{e^{-j\frac{{2\pi }}{N}\left( {p - 1} \right)\left( {n - 1} \right)}}}{{\bf{A}}_n}$ is satisfied. Similarly, we can define ${\rm{IFF}}{{\rm{T}}_{{\rm{Mtx}}}}( \cdot )$ by ${{\bf{A}}_n}{\rm{ = IFF}}{{\rm{T}}_{{\rm{Mtx}}}}({{{\bf{\tilde H}}}_p})$. Actually, when observing the fix-position element, e.g., $(m,l)$-element ${{[ {{{\bf{\tilde H}}}_p}]}_{m,l}}$ in the matrix ${{{\bf{\tilde H}}}_p}$, and re-organizing the fix-position elements, e.g., $(m,l)$-elements, from ${{{\bf{A}}_n}}$ as a vector by ${{\bf{a}}^{m,l}}\!\! = \!\!( {{{[{{\bf{A}}_1}]}_{m,l}},...,{{[{{\bf{A}}_N}]}_{m,l}}}) \!\!\in\!\! {\mathbb{C}^{N \times 1}}$, one can find ${{[ {{{\bf{\tilde H}}}_p}]}_{m,l}} =\!\! {\rm{FFT(}}{{{\bf{a}}^{m,l}}})$, where ${\rm{FFT(}}\cdot)$ is the traditional Fourier transform which can be implemented by fast Fourier transform (FFT). Since there are $M \times M$ different ${{\bf{a}}^{m,l}}$, ${{{\bf{\tilde H}}}_p}{\rm{ = FF}}{{\rm{T}}_{{\rm{Mtx}}}}({{\bf{A}}_n})$ is actually $M^2$ times Fourier transform operations. Thus, ${\rm{FF}}{{\rm{T}}_{{\rm{Mtx}}}}( \cdot )$, and ${\rm{IFF}}{{\rm{T}}_{{\rm{Mtx}}}}( \cdot )$ are both linear operators, which will be used for the following theorem.
\begin{theorem}\label{th1}
\emph{${\bf{H}}_{{\rm{eff}}}^{{\rm{rect}}}$ is a block-circulant matrix, i.e., ${{\bf{H}}_{{\rm{eff}}}^{{\rm{rect}}}}\!\! =\!\!{\rm{circ}}\{ {{\bf{A}}_1},{{\bf{A}}_2}\! \cdots,\!{{\bf{A}}_N}\} $, and the submatrices ${{\bf{A}}_n}$ have a Fourier transform relationship with channel impulse response ${{{\bf{\tilde H}}}_p}$ as ${{\bf{A}}_n}\! \!=\! \!{\rm{IFF}}{{\rm{T}}_{{\rm{Mtx}}}}({{{\bf{\tilde H}}}_p})$.}
\end{theorem}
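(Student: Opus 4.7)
The plan is to exploit the Kronecker structure of the outer factors together with the block-diagonal form of $\tilde{\mathbf H}$. First I would view $\mathbf H_{\text{eff}}^{\text{rect}}$ as an $N\times N$ grid of $M\times M$ blocks, and note that $\mathbf F_N\otimes \mathbf I_M$ can be written as the block matrix whose $(i,p)$-th $M\times M$ block is $[\mathbf F_N]_{i,p}\,\mathbf I_M$ (and analogously for $\mathbf F_N^{\mathrm H}\otimes\mathbf I_M$). Performing the resulting block-matrix multiplication and using $\tilde{\mathbf H}=\mathrm{diag}(\tilde{\mathbf H}_1,\ldots,\tilde{\mathbf H}_N)$, all cross terms with unequal block indices vanish, so the $(i,j)$-th block of $\mathbf H_{\text{eff}}^{\text{rect}}$ collapses to a single sum over the diagonal index $p$:
\begin{equation*}
\bigl[\mathbf H_{\text{eff}}^{\text{rect}}\bigr]_{(i,j)\text{-block}}=\sum_{p=1}^{N}[\mathbf F_N]_{i,p}\,[\mathbf F_N^{\mathrm H}]_{p,j}\,\tilde{\mathbf H}_p.
\end{equation*}

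Next I would substitute the explicit entries of the DFT and IDFT matrices. The two twiddle factors combine into a single complex exponential whose argument is proportional to $(p-1)(i-j)$, up to a $1/N$ normalization. Consequently the $(i,j)$-th block depends on $i$ and $j$ only through $(i-j)\bmod N$, which is precisely the definition of a block-circulant matrix; this establishes $\mathbf H_{\text{eff}}^{\text{rect}}=\mathrm{circ}\{\mathbf A_1,\mathbf A_2,\ldots,\mathbf A_N\}$. To identify the $\mathbf A_n$, I would read off the first block column by setting $j=1$, $i=n$, obtaining an expression of the form $\mathbf A_n=\frac{1}{N}\sum_{p=1}^{N}e^{\pm j2\pi(p-1)(n-1)/N}\,\tilde{\mathbf H}_p$.

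To translate this into the paper's $\mathrm{IFFT}_{\text{Mtx}}(\cdot)$ language, I would look entrywise: fixing any position $(m,l)$ in the $M\times M$ blocks, the scalar sequence $\{[\tilde{\mathbf H}_p]_{m,l}\}_{p=1}^{N}$ is mapped to $\{[\mathbf A_n]_{m,l}\}_{n=1}^{N}$ by an ordinary $N$-point inverse DFT. Since this is true for all $M^2$ positions simultaneously, the map between the two matrix sequences is exactly the operator $\mathrm{IFFT}_{\text{Mtx}}(\cdot)$ introduced just before the theorem, giving $\mathbf A_n=\mathrm{IFFT}_{\text{Mtx}}(\tilde{\mathbf H}_p)$.

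The main obstacle is purely bookkeeping: one must align the sign convention and $1/N$ scaling of the DFT/IDFT matrices $\mathbf F_N,\mathbf F_N^{\mathrm H}$ used in \eqref{recheff} with the sign convention adopted in the definition of $\mathrm{FFT}_{\text{Mtx}}/\mathrm{IFFT}_{\text{Mtx}}$, so that the ``inverse'' direction (and not the forward one) is correctly identified. Once the convention is fixed, the rest of the argument is a direct application of Kronecker-product block arithmetic and the orthogonality of DFT rows, so no nontrivial analytic difficulty remains.
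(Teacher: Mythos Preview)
Your proposal is correct and follows essentially the same approach as the paper: partition $\mathbf H_{\text{eff}}^{\text{rect}}$ into an $N\times N$ array of $M\times M$ blocks, use the Kronecker/block-diagonal structure to write the $(i,k)$-block as $\sum_{p}[\mathbf F_N]_{i,p}[\mathbf F_N^{\mathrm H}]_{p,k}\tilde{\mathbf H}_p$, combine the two twiddle factors into a single exponential depending only on $k-i$, and read off the resulting $\mathrm{IFFT}_{\text{Mtx}}$ relation. The only difference is cosmetic---the paper indexes the circulant via $n=\langle k-i\rangle_N+1$ rather than your first-column readoff, and absorbs the normalization into its $\mathrm{IFFT}_{\text{Mtx}}$ convention---which is precisely the bookkeeping you already flagged.
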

\begin{proof}
${\bf{H}}_{{\rm{eff}}}^{{\rm{rect}}} \!\!\in\!\! {\mathbb{C}^{NM \times NM}}$ can be viewed as a block matrix, including $N \!\times\! N $ small square matrices denoted by ${\bf{H}}_{i,k}^{{\rm{rect}}}\! \in \!{\mathbb{C}^{M \!\times M}}$
, and from (\ref{recheff}), ${\bf{H}}_{i,k}^{{\rm{rect}}}$ is calculated as
\begin{equation}\label{heffgeneral}
{\bf{H}}_{i,k}^{{\rm{rect}}}= \sum\limits_{p = 1}^N {{f_{ip}}{f'_{pk}}} {{\bf{\tilde H}}_{p}}, i,k,p = 1,...,N,
\end{equation}
where ${f_{ip}} \!\!= \!\!{[ {{{\bf{F}}_N}} ]_{i,p}},{f'_{pk}}\!\! = \!\!{[ {{\bf{F}}_N^{\rm{H}}} ]_{p,k}}$. ${f_{ip}}{f'_{pk}}$ is further derived as
\begin{small}
\begin{equation}\label{fourpro}
\!\!{f_{ip}}{f'_{pk}}\!\! = \!\!{e^{ - j\frac{{2\pi }}{N}\left( {i - 1} \right)\left( {p - 1} \right)}}{e^{j\frac{{2\pi }}{N}\left( {p - 1} \right)\left( {k - 1} \right)}}\!\! = \!\!{e^{j\frac{{2\pi }}{N}\left( {p - 1} \right)\left( {k - i} \right)}}.
\end{equation}
\end{small}
The right hand side of (\ref{fourpro}) is only related to the difference of $i$ and $k$. By defining $n = {\left< {k\! - \! i} \right>_N} \!+ \!1$, and substituting $n$ into (\ref{fourpro}), we can obtain ${f_{ip}}{f'_{pk}}\!\! =\!\! {e^{j\frac{{2\pi }}{N}\left( {p - 1} \right)\left( {n - 1} \right)}}\!\!\triangleq\!\! { f_{n,p}}$, and then (\ref{heffgeneral}) can be simplified to
\begin{equation}\label{simgener}
{\bf{H}}_n^{{\rm{rect}}} = \sum\limits_{p = 1}^N {{f_{n,p}}} {{{\bf{\tilde H}}}_p} = {\rm{IFF}}{{\rm{T}}_{{\rm{Mtx}}}}({{{\bf{\tilde H}}}_p}).
\end{equation}
By defining ${{\bf{A}}_n} = {\bf{H}}_n^{{\rm{rect}}} \!\in\! {\mathbb{C}^{M \times M}}$, ${\bf{H}}_{{\rm{eff}}}^{{\rm{rect}}}$ can be rewritten in a block-circulant matrix form as
\begin{equation}\label{hcric}
{\bf{H}}_{{\rm{eff}}}^{{\rm{rect}}} \!\!= \!\!{\rm{circ}}\{ {{\bf{A}}_1},{{\bf{A}}_2} \cdots ,{{\bf{A}}_N}\},{{\bf{A}}_n} \!\!= \!\!{\rm{IFF}}{{\rm{T}}_{{\rm{Mtx}}}}({{{\bf{\tilde H}}}_p}).
\end{equation}
Then, the proof of Theorem~\ref{th1} is finished.
\end{proof}

As discussed before, under the ideal waveform assumption, the submatrices of the effective channel belong to circular matrices. In this case, the low-complexity ZF and MMSE schemes are more easily attained, since ${\bf{H}}_{{\rm{eff}}}^{{\rm{ide}}}$ can be diagonalized by FFT matrix thanks to the doubly circular property~\cite{8786203}.
Oppositely, with rectangular waveforms, the data symbol in DD domain does not experience a constant channel any more, which causes that ${{\bf{A}}_n}$ in~\eqref{hcric} is not a circulant matrix. The appealing diagonalization operation is not available for rectangular waveforms, which calls for new low-complexity methods based on Theorem~\ref{th1}.

\subsection{Low-Complexity ZF Equalization}
According to the vectorized model in (\ref{endtoend}), the vectorized output signal ${\bf{\hat x}}_{_{{\rm{DD}}}}^{\rm{ZF}}$ of ZF equalizer can be written by
\begin{equation}\label{zf}
{\bf{\hat x}}_{_{{\rm{DD}}}}^{{\rm{ZF}}} = {{\bf{W}}_{{\rm{ZF}}}}{{\bf{y}}_{{\rm{DD}}}} = {\bf{H}}_{{\rm{eff}}}^{ - 1}{{\bf{y}}_{{\rm{DD}}}},
\end{equation}
where ${{\bf{W}}_{{\rm{ZF}}}}$ denotes the evaluation matrix for ZF. Since direct inversion requires overwhelmingly high computational complexity, we develop a low-complexity algorithm. By referring to~\cite{1143132}, the inversion of a block-circulant matrix, i.e., ${\bf{H}}_{{\rm{eff}}}^{ - 1}$, is also a block-circulant matrix. Thus, we can represent ${{\bf{W}}_{{\rm{ZF}}}}$ as
\begin{equation}\label{hinv}
{{\bf{W}}_{{\rm{ZF}}}} = {\bf{H}}_{{\rm{eff}}}^{ - 1} = {\rm{circ}}\{ {{\bf{B}}_1},{{\bf{B}}_2} \cdots {{\bf{B}}_N}\},
\end{equation}
where ${{\bf{B}}_q} \!\!\in \!\!{\mathbb{C}^{M \times M}}$ can be obtained by ${{\bf{A}}_n}$ as~\cite{1143132}
\begin{align}\label{bp}
\nonumber {{\bf{B}}_q}\! = \!\frac{1}{N}\sum\limits_{t = 1}^N {{{\left( {{e^{j\frac{{2\pi (p - 1)}}{N}}}} \right)}^{N - t + 1}}{\bf{S}}_t^{ - 1}}& \! = \! \frac{1}{N}{\rm{FF}}{{\rm{T}}_{{\rm{Mtx}}}}({\bf{S}}_t^{ - 1}),\\
{{\bf{S}}_t} = \sum\limits_{n = 1}^N {{e^{j\frac{{\left( {t - 1} \right)\left( {k - 1} \right)}}{N}}}{{\bf{A}}_n}} & = {\rm{IFF}}{{\rm{T}}_{{\rm{Mtx}}}}({{\bf{A}}_n}),
\end{align}
with ${{\bf{S}}_t} \in {\mathbb{C}^{M \times M}}$. According to \eqref{hinv}, (\ref{bp}), to obtain ${{\bf{W}}_{{\rm{ZF}}}}$, the direct inversion of ${\bf{H}}_{{\rm{eff}}}^{{\rm{rect}}} \in {\mathbb{C}^{NM \times NM}}$ can be avoided by calculating the inversion of ${{\bf{S}}_t}$ and executing FFT operations. Moreover, it is observed that ${{\bf{S}}_t}$ has the same structure of ${{\bf{A}}_n}$. By exploiting the sparse property and structure knowledge of ${{\bf{S}}_t}$, we further develop a LU factorization method to obtain ${\bf{S}}_t^{ - 1}$ to proceed with the simplification of ZF equalization algorithm. The entire details of the proposed low-complexity ZF equalization are provided in Algorithm~\ref{alg1}.

\begin{algorithm}[tb]
\caption{Low-Complexity ZF Equalization}
\label{alg1}
\begin{algorithmic}[1]
\begin{small}
\INPUT  Delay position vector ${\bf{d}}$, multi-path number $P$, ${{{\bf{\tilde H}}}_p}$.\\
        \STATE  Define ${\bf{\Phi}} \!\!= \!\!{{\bf{I}}_M}$, vector ${\bf{u}}\!\! =\!\! [{\bf{d}},M \!- \!{D_P}]$, ${\bf{Y}}\!\! =\!\! {{\bf{0}}_M} \in {\mathbb{C}^{M \times M}}$.\\
        \STATE Compute submatrices ${{\bf{A}}_n}{\rm{ = IFF}}{{\rm{T}}_{{\rm{Mtx}}}}({{{\bf{\tilde H}}}_p})$.\\
        \STATE Compute ${{\bf{S}}_t}{\rm{ = IFF}}{{\rm{T}}_{{\rm{Mtx}}}}({{{\bf{A}}}_n})$.
\FOR {$t=1$ to $N$}
        \STATE  From $i \!= \!1$ to $M \!- \!{D_P}$, ${\left[ {\bf{\Phi }} \right]_{i + {\bf{u}},i}} = {\textstyle{{{{\left[ {{{\bf{S}}_t}} \right]}_{i + {\bf{u}},i}}} \over {{{\left[ {{{\bf{S}}_t}} \right]}_{i,i}}}}}$
\FOR{$n = 1$ to $P+1$}
      \STATE $i = {[{\bf{u}}]_n} + 1:{[{\bf{u}}]_{n + 1}}$;
       \STATE  ${\left[ {{\bf{LU}}} \right]_i} = {\left[ {{{\bf{S}}_t}} \right]_i} - \sum\limits_{k = 2}^n {\sum\limits_{j = i}^M {{{\left[ {{\bf{\Phi}}} \right]}_{i,i - {D_k}}}{{\left[ {{\bf{LU}}} \right]}_{i - {D_k},j}}} } $.
     \ENDFOR
       \STATE Update the remaining part of ${\bf{\Phi}}$ by regular LU factorization.
\FOR{$n = 1$ to $P$}
      \STATE $k = {[{\bf{u}}]_n} + 1:{[{\bf{u}}]_{n + 1}}$;
      \STATE ${\left[ {\bf{Y}} \right]_k} = {\left[ {{{\bf{I}}_M}} \right]_k} - \sum\limits_{i = 1}^n {{{\left[ {{\bf{\Phi}}} \right]}_{k,k - {\bf{u}}(i)}}{{\left[ {\bf{Y}} \right]}_{k - {\bf{u}}(i)}}} $.
      \ENDFOR
      \STATE From $k \!\!=\!\! {D_P} \!+\! 1$ to $M$, ${\left[ {\bf{Y}} \right]_k} = {\left[ {{{\bf{I}}_M}} \right]_k} - {\left[ {{\bf{\Phi}}} \right]_{k,1:k - 1}}{\left[ {\bf{Y}} \right]_{1:k - 1}}$.
      \STATE From $k \!\!=\!\! N$ to $1$, set $f$ as the maximum between $M - {D_{P}}$ and ${k + 1}$, then ${\left[ {{\bf{S}}_t^{ - 1}} \right]_k} = {\left[ {\bf{Y}} \right]_k} - {\left[ {\bf{\Phi }} \right]_{k,f:M}}{\left[ {\bf{Y}} \right]_{f:M}}$.
    \ENDFOR
    \STATE Compute ${{\bf{B}}_q}{\rm{ = }}\frac{1}{N}{\rm{FF}}{{\rm{T}}_{{\rm{Mtx}}}}({\bf{S}}_{_t}^{ - 1})$.

    \OUTPUT Evaluation matrix ${{\bf{W}}_{{\rm{ZF}}}}\! = \!{\rm{circ}}\{ {{\bf{B}}_1},{{\bf{B}}_2} \cdots {{\bf{B}}_N}\}$.
\end{small}
\end{algorithmic}
\end{algorithm}

\subsection{Low-Complexity MMSE Equalization}
ZF may cause the noise enhancement issue, which can be addressed by the MMSE equalization given by
\begin{equation}\label{mmse}
{\bf{\hat x}}_{_{{\rm{DD}}}}^{{\rm{MMSE}}}\!\! = \!\!{{\bf{W}}_{{\rm{MMSE}}}}{{\bf{y}}_{{\rm{DD}}}} \!\! = \!\! {\left( {{\bf{H}}_{{\rm{eff}}}^{\rm{H}}{{\bf{H}}_{{\rm{eff}}}} + {\sigma ^2}{{\bf{I}}_{MN}}} \right)^{ - 1}}{\bf{H}}_{{\rm{eff}}}^{\rm{H}}{{\bf{y}}_{{\rm{DD}}}},
\end{equation}
where ${{\bf{W}}_{{\rm{MMSE}}}}$ denotes the evaluation matrix for MMSE, and ${{\sigma ^2}}$ is the noise energy.
\begin{proposition}\label{th2}
The evaluation matrix ${{\bf{W}}_{{\rm{MMSE}}}}$ is a block-circulant matrix and the submatrices can be expressed as
\begin{equation}\label{mmseblock}
\begin{small}
{{\bf{W}}_{{\rm{MMSE}}}} = {\rm{circ}}\left\{ {{{{\bf{\tilde W}}}_1},{{{\bf{\tilde W}}}_2} ,\cdots ,{{{\bf{\tilde W}}}_N}} \right\},
\end{small}
\end{equation}
where ${{{\bf{\tilde W}}}_k} \!\!=\!\! \sum\nolimits_{i = 1}^N {{\bf{C}}_i^{ - 1}{\bf{A}}_{{{\left[ {i -k} \right]}_N}\! +\! 1}^{\rm{H}}}\! \!\! \in\! \!\!{\mathbb{C}^{M \times M}}$, and ${\bf{C}}_i^{ - 1}\!\! \in\!\! {\mathbb{C}^{M \times M}}$ is the submatrices of the block-circulant matrix ${{\bf{C}}^{ - 1}} = {\left( {{\bf{H}}_{{\rm{eff}}}^{\rm{H}}{{\bf{H}}_{{\rm{eff}}}} + {\sigma ^2}{\bf{I}}} \right)^{ - 1}} = {\rm{cric}}\{ {\bf{C}}_1^{ - 1},{\bf{C}}_2^{ - 1}, \cdots ,{\bf{C}}_N^{ - 1}\}$.
\end{proposition}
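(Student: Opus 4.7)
The plan is to leverage Theorem~\ref{th1} together with the well-known fact that the class of $N\times N$ block-circulant matrices with $M\times M$ blocks is closed under addition, multiplication, conjugate transposition, and (when invertible) inversion. This is most cleanly seen from the simultaneous block-diagonalization by $\mathbf{F}_N\otimes\mathbf{I}_M$, which is the block-matrix analogue of the scalar circulant diagonalization. Assuming this closure, every manipulation used in forming $\mathbf{W}_{\mathrm{MMSE}}$ stays inside the algebra, so the block-circulant structure and an explicit formula for the leading block-column should both drop out mechanically.

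Concretely, I would first note that since $\mathbf{H}_{\mathrm{eff}}^{\mathrm{rect}}=\operatorname{circ}\{\mathbf{A}_1,\ldots,\mathbf{A}_N\}$ by Theorem~\ref{th1}, its $(i,j)$-th block is $\mathbf{A}_{\langle i-j\rangle_N+1}$, and consequently $[\mathbf{H}_{\mathrm{eff}}^{\rm H}]_{i,j}=\mathbf{A}_{\langle j-i\rangle_N+1}^{\rm H}$, which is again block-circulant. Then I would chain the closure properties: $\mathbf{H}_{\mathrm{eff}}^{\rm H}\mathbf{H}_{\mathrm{eff}}$ is block-circulant as a product of block-circulants, adding $\sigma^2\mathbf{I}_{MN}=\operatorname{circ}\{\mathbf{I}_M,\mathbf{0},\ldots,\mathbf{0}\}$ preserves that structure, so $\mathbf{C}\triangleq\mathbf{H}_{\mathrm{eff}}^{\rm H}\mathbf{H}_{\mathrm{eff}}+\sigma^2\mathbf{I}_{MN}$ is block-circulant; by the inverse-closure property cited from~\cite{1143132} and used in~\eqref{hinv}, so is $\mathbf{C}^{-1}=\operatorname{circ}\{\mathbf{C}_1^{-1},\ldots,\mathbf{C}_N^{-1}\}$. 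Finally, $\mathbf{W}_{\mathrm{MMSE}}=\mathbf{C}^{-1}\mathbf{H}_{\mathrm{eff}}^{\rm H}$ is again block-circulant, which is the first claim.

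For the explicit submatrix formula, I would just read off the first block-column of $\mathbf{W}_{\mathrm{MMSE}}$. Using $[\mathbf{C}^{-1}]_{k,l}=\mathbf{C}^{-1}_{\langle k-l\rangle_N+1}$ and $[\mathbf{H}_{\mathrm{eff}}^{\rm H}]_{l,1}=\mathbf{A}_{\langle 1-l\rangle_N+1}^{\rm H}$, the block in position $(k,1)$ is
\begin{equation*}
\tilde{\mathbf{W}}_k=\sum_{l=1}^{N}\mathbf{C}^{-1}_{\langle k-l\rangle_N+1}\,\mathbf{A}_{\langle 1-l\rangle_N+1}^{\rm H}.
\end{equation*}
Re-indexing by $i=\langle k-l\rangle_N+1$ (so that $l-1\equiv k-i\ (\mathrm{mod}\ N)$, whence $\langle 1-l\rangle_N\equiv\langle i-k\rangle_N$) converts this sum into $\sum_{i=1}^{N}\mathbf{C}_i^{-1}\mathbf{A}_{\langle i-k\rangle_N+1}^{\rm H}$, matching~\eqref{mmseblock}.

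The only non-routine step is the inverse-closure property for block-circulant matrices, but this is already invoked earlier in~\eqref{hinv},~\eqref{bp} via~\cite{1143132} (and can alternatively be justified by conjugating with $\mathbf{F}_N\otimes\mathbf{I}_M$, which block-diagonalizes every block-circulant matrix into $N$ independent $M\times M$ blocks that can be inverted separately). Everything else is bookkeeping with cyclic indices; the main place to be careful is the final re-indexing that turns the convolution sum into the stated form of $\tilde{\mathbf{W}}_k$.
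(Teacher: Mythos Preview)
Your proposal is correct and follows essentially the same route as the paper: both argue that $\mathbf{H}_{\mathrm{eff}}^{\rm H}$ is block-circulant, invoke closure of block-circulants under product, addition of $\sigma^2\mathbf{I}$, and inversion to conclude that $\mathbf{C}^{-1}$ and hence $\mathbf{W}_{\mathrm{MMSE}}=\mathbf{C}^{-1}\mathbf{H}_{\mathrm{eff}}^{\rm H}$ are block-circulant, and then read off the submatrices from the block-circulant product. The paper's proof is terser (it simply states that the submatrix expression follows ``by using block-circulant matrix multiplication twice''), whereas you spell out the first-column computation and the cyclic re-indexing explicitly; your extra remark about block-diagonalization via $\mathbf{F}_N\otimes\mathbf{I}_M$ is a nice alternative justification but not needed here.
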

\begin{proof}
The conjugate transpose of effective channel can be expressed as ${\bf{H}}_{{\rm{eff}}}^{\rm{H}}\!\! = \!\!{\rm{circ}}\{ {\bf{A}}_1^{\rm{H}},{\bf{A}}_N^{\rm{H}},{\bf{A}}_{N - 1}^{\rm{H}}, \cdots ,{\bf{A}}_2^{\rm{H}}\}$, and hence ${\bf{C}}\!\! =\!\! \left( {{\bf{H}}_{{\rm{eff}}}^{\rm{H}}{{\bf{H}}_{{\rm{eff}}}} + {\sigma ^2}{\bf{I}}} \right)$ is the product of two block-circulant matrices plus a diagonal matrix. The product of two block-circulant matrices is also a block-circulant matrix, which means both ${\bf{C}}$ and ${{\bf{C}}^{ - 1}}$ are block-circulant matrices. Thus, ${{\bf{W}}_{{\rm{MMSE}}}}$ is a block-circulant matrix, and a general expression of submatrices in ${{\bf{W}}_{{\rm{MMSE}}}}$ can be obtained by using block-circulant matrix multiplication twice, as shown in~(\ref{mmseblock}).
\end{proof}

Rewrite ${\bf{C}}$ as ${\bf{C}} \!\!=\!\! {\rm{cric}}\{ {\bf{C}}_1^{},{\bf{C}}_2^{}, \cdots ,{\bf{C}}_N^{}\}$. According to (\ref{mmseblock}), ${{\bf{C}}^{{\rm{ - 1}}}} \!\!\in\!\! {\mathbb{C}^{NM \times NM}}$ is required when obtaining ${{\bf{W}}_{{\rm{MMSE}}}}$. We can apply (\ref{bp}) to simplify the inversion operation as
\begin{equation}\label{mmseinv}
{\bf{C}}_q^{ - 1} = \frac{1}{N}{\rm{FF}}{{\rm{T}}_{{\rm{Mtx}}}}({\bf{S}}_t^{ - 1}),{{\bf{S}}_t} = {\rm{IFF}}{{\rm{T}}_{{\rm{Mtx}}}}({{\bf{C}}_n}).
\end{equation}
Based on the analysis above, the direct inversion of MMSE equalization can also be avoided by using a similar algorithm as Algorithm~\ref{alg1}.

\subsection{Complexity Analysis}
As there are $M^2$ times $N$-point FFT (IFFT) operations, the computational complexity of ${\rm{FF}}{{\rm{T}}_{{\rm{Mtx}}}}( \cdot )$ (${\rm{IFF}}{{\rm{T}}_{{\rm{Mtx}}}}( \cdot )$) is $\mathcal{O}\left( {{M^2}N\log N} \right)$. Moreover, the LU factorization method reduces the complexity of direct inversion of ${\bf{S}}_t^{}$ from ${\cal O}\left( {{M^3}} \right)$ to ${\cal O}\left( {{M^2}{D_P}} \right)$, usually with ${D_P} \ll M$. As for the multiplication of the matrices in (\ref{mmseblock}), by exploiting the sparsity of the matrices, the complexity is reduced from ${\cal O}( {{{\left( {NM} \right)}^3}} )$ to ${\cal O}\left( {{M^2}{N^2}P} \right)$. Finally, Table~\uppercase\expandafter{\romannumeral1} shows the computational complexity comparison between the direct-matrix-inversion-based equalization and the proposed low-complexity equalization. The effective channel matrix in the DD domain is obtained by (\ref{simgener}), which is also simplified via the operation ${\rm{IFF}}{{\rm{T}}_{{\rm{Mtx}}}}( \cdot )$. For $N=M=32,P=6$, the overall complexity can be decreased by a few thousand times for ZF and a few hundred times for MMSE. More complexity reduction can be achieved for practical systems with larger $N,M,P$.

\section{Simulation Results}
In this section, simulation results are provided. The parameters are set as follows: The carrier frequency is 5~GHz, the subcarrier space $\Delta f$ is 15~KHz, the number of subcarrier $M$ is $64$, the number of OTFS symbols $N$ in a transmission block is $32$, and the channel is modeled as Vehicular Channel B~\cite{ahmadi2007channel}. Hereby, ${D_P}$ is 20 and let ${L_{{\rm{CP}}}}\!\! =\!\! {D_P}\! +\! 1$. The 4-QAM modulated signal is adopted and the maximum doppler spread is ${f_{\max }} \!\!=\!\! 1$~KHz. The simulation results are averaged over 20000 times realizations.
\begin{figure}[!htbp]
  \centering
    \vspace{-0.3cm}
  \includegraphics[width=3in]{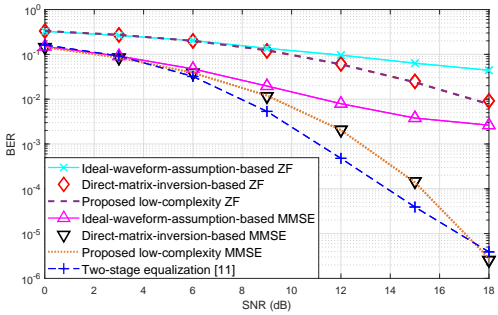}\\
  \setlength{\abovecaptionskip}{0pt}
\setlength{\belowcaptionskip}{-2cm}
  \caption{BER performance comparison of different schemes.}\label{fig1}
    \vspace{-0.3cm}
\end{figure}

Fig.~\ref{fig1} indicates that our proposed low-complexity equalization algorithms result in no performance loss compared with the direct-inversion-based schemes~\cite{8422467}. The significant performance degradation is observed by the inappropriate ideal-waveform assumption~\cite{8786203}, due to that the interference under rectangular waveforms gives rise to the variation of channel gains among different data symbols in the DD domain, which causes a mismatch between the theoretical and practical channels. Note that a two-stage equalization scheme, resembling traditional feedback decision equalization, has been recently proposed in~\cite{DBLP:journals/corr/abs-1709-02505}. From Fig.~\ref{fig1} and Table~\uppercase\expandafter{\romannumeral1} ($\Gamma$ is the complexity induced by the hard-decision in the second stage), the complexity of our proposed schemes is at least a few hundreds times lower than that of~\cite{DBLP:journals/corr/abs-1709-02505} with slight performance loss during the signal-noise-ratio (SNR) range of 5-16~dB. Besides, more advantages of our schemes are as follows: The linear equalization only requires the easily-obtained DD domain channel~\cite{8422467}, while the two-stage equalization requires the intractable frequency-domain response of time-varying channel in practical OTFS systems; the two-stage equalization imposes higher hardware requirements on practical systems.

\begin{table}[!htbp]
  \centering
  \vspace{-0.3cm}
  \caption{Complexities of Different Schemes}\label{time}
  \vspace{-0.1cm}
  \begin{tabular}{|c|c|}
  \hline
  \bfseries  Scheme  &\bfseries Complexity \\
  \hline
  Direct-inversion-based ZF/MMSE~\cite{8422467}  &  $\mathcal{O}( {{{( {NM} )}^3}} )$  \\
  \hline
  Two-stage equalization in~\cite{DBLP:journals/corr/abs-1709-02505}  & $\mathcal{O}( {{{( {NM} )}^3}} ){\rm{ + }}\Gamma $\\
  \hline
  Proposed low-complexity ZF/MMSE    & ${\cal O}\!( {{M^2}N\!\log \!N } \!)$/${\cal O}( {{M^2}{N^2}P} )$ \\
  \hline
  \end{tabular}
  \vspace{-0.5cm}
\end{table}

\section{Conclusion}
In this paper, we identify and prove the effective channel matrix structure of OTFS systems with rectangular waveforms. Based on the proved structure feature, new low-complexity linear equalization methods are proposed for ZF and MMSE, respectively. Analysis and simulation results validate that the proposed linear equalization methods can significantly reduce the computational complexity with no performance loss, which enables the future applications of OTFS technologies to practical communication systems.


\ifCLASSOPTIONcaptionsoff
  \newpage
\fi

\bibliographystyle{IEEEtran}
\bibliography{IEEEabrv,Ref_19_COMLET_0914}
\end{document}